\DeclareMathOperator*{\inv}{In}
\newtheorem{definition}{Definition}[section]
\newtheorem{theorem}{Theorem}[section]
\newtheorem{remark}[theorem]{Remark}
\newtheorem{lemma}[theorem]{Lemma}
\begin{document}
%
% paper title
% can use linebreaks \\ within to get better formatting as desired
% Do not put math or special symbols in the title.
\title{Synchronizing Rankings via Interactive Communication}
%
%
% author names and IEEE memberships
% note positions of commas and nonbreaking spaces ( ~ ) LaTeX will not break
% a structure at a ~ so this keeps an author's name from being broken across
% two lines.
% use \thanks{} to gain access to the first footnote area
% a separate \thanks must be used for each paragraph as LaTeX2e's \thanks
% was not built to handle multiple paragraphs
%

\author{Lili~Su,~\IEEEmembership{Student Member,~IEEE,}
        %John~Doe,~\IEEEmembership{Fellow,~OSA,}
        and~Olgica~Milenkovic,~\IEEEmembership{Senior Member,~IEEE}% <-this % stops a space

\thanks{Lili Su and Olgica Milenkovic are both with the Department
of Electrical and Computer Engineering, University of Illinois at Urbana--Champaign, Urbana,
IL, 61801 USA e-mail: \{lilisu3,milenkov\}@illinois.edu.
}
}

\maketitle

% As a general rule, do not put math, special symbols or citations
% in the abstract or keywords.
\begin{abstract}
We consider the problem of exact synchronization of two rankings at remote locations connected by a two-way channel. Such synchronization problems arise when items in the data are distinguishable, as is the case for playlists, tasklists, crowdvotes and recommender systems rankings. Our model accounts for different constraints on the communication throughput of the forward and feedback links, resulting in different anchoring, syndrome and checksum computation strategies. Information editing is assumed of the form of deletions, insertions, block deletions/insertions, translocations and transpositions. The protocols developed under the given model are order-optimal with respect to genie aided lower bounds.
\end{abstract}

% Note that keywords are not normally used for peerreview papers.
%\begin{IEEEkeywords}
%IEEEtran, journal, \LaTeX, paper, template.
%\end{IEEEkeywords}

% For peer review papers, you can put extra information on the cover
% page as needed:
% \ifCLASSOPTIONpeerreview
% \begin{center} \bfseries EDICS Category: 3-BBND \end{center}
% \fi
%
% For peerreview papers, this IEEEtran command inserts a page break and
% creates the second title. It will be ignored for other modes.
\IEEEpeerreviewmaketitle

\section{Introduction}
% The very first letter is a 2 line initial drop letter followed
% by the rest of the first word in caps.
%
% form to use if the first word consists of a single letter:
% \IEEEPARstart{A}{demo} file is ....
%
% form to use if you need the single drop letter followed by
% normal text (unknown if ever used by IEEE):
% \IEEEPARstart{A}{}demo file is ....
%
% Some journals put the first two words in caps:
% \IEEEPARstart{T}{his demo} file is ....
%
% Here we have the typical use of a "T" for an initial drop letter
% and "HIS" in caps to complete the first word.
Rankings are emerging data formats that capture information about orderings of elements, and they include linear orders, weak orders -- orders with ties, and partial orders. Linear orders are most frequently referred to as permutations, as they involve distinct elements, while weak orders are sometimes known as multiset permutations. Ranking formats appear in a wide variety of applications, including social choice theory, where one is concerned with ranking candidates based on their suitability for a certain position~\cite{sen1986social}, search and meta-search engines, where one is concerned with ranking web-pages according to their relevance with respect to search keywords~\cite{joachims2002optimizing}, and bioinformatics and gene prioritization, where one ranks genes according to their likelihood of being involved in a disease, or where one is concerned with rearrangements of unique genetic blocks within different genomes~\cite{aerts2006gene}. In addition, permutations have found applications for efficient encoding of automata and sequences~\cite{kari1996representation}, while both permutations and multiset permutations are frequently used for encoding binary relations between objects. Many popular voting sites store large volumes of ordinal and relational data, frequently based on pairwise comparisons, and examples include CrowdVoting systems such as Reddit, Heycrowd, and KittenWar~\cite{lewry2007kittenwar}. Permutations are reconstructed based on a sufficiently large number of informative pairwise comparisons, which are in one-to-one correspondence with binary relations~\cite{barbay2007adaptive}.

A number of ordinal data processing systems call for synchronization of their ranking information at remote locations, within static or dynamically changing data acquisition environments. Here, synchronization refers to reaching a consensus ranking or reconstructing a ranking at one node based on partial information given at another node of the network. Different nodes may contain different versions of a file containing ordinal data, such as for example data reflecting preference orders for movies, politicians, food choices, music playlists and other items.

Other important examples pertain to distributed and metasearch engine systems, where information about millions of dynamically changing web-pages is stored, and routing engines, storing large volumes of priority information. In the former case, of particular interest are rankings of web-pages which have to be constructed using some sorting criteria or algorithm, such as PageRank, specific to data at a given location. For example, at one location, one may have full access to the web-pages and their scores, while at another, only a partial order may be available, reflecting the scores of a reduced number of web-pages. Every time a web-page is updated, the score of the web-page changes as well. This change in score may consequently change the ranking of the web-pages. Running PageRank is a complex, time- and energy-consuming operation and it may be desirable to quickly estimate the similarity of rankings~\cite{sun2007communication,gopalan2007estimating} between different engines and synchronize their content if required. Other emerging distributed storage systems in which synchronization of permutations may be required includes flash memories in the cloud~\cite{feuerlicht2013can}, due to the fact that rank modulation coding represents a desirable and efficient means of information storage in flash memories.

Synchronization of binary and non-binary data through interactive communication was first described in~\cite{orlitsky1987communication,orlitsky2001practical}, and extended to synchronization of sets and related entities in~\cite{orlitsky2001practical,minsky2003set,VenkataramananSR13,bitouze2013synchronization,6325200}. A number of synchronization protocols are implemented in practice, such as rsync and dsync~\cite{knauth2013dsync}, and used in dropbox and other file reconciliation systems. Nevertheless, no results on efficient synchronization protocols for permutations are currently known.

The problem we consider in this context may be succinctly stated as follows: A transmitter and a receiver, connected by a two-way noiseless channel, are placed at different locations. Each link has a total communication throughput (i.e., the largest number of bits communicated through the link within a synchronization procedure), which for the forward and feedback links equal $c_{tr}$ and $c_{rt}$, respectively.
The transmitter stores ordinal information of the form of a (partial) permutation $\sigma^{X}$, while the receiver stores a ``noisy'' version of $\sigma^{X}$, denoted by $\sigma^{Y}$.
Ordinal data noise refers to random deletions/insertions, block deletions/insertions, translocation and transposition errors. The problem of interest is to \emph{exactly} restore $\sigma^{X}$ at the receiver with the smallest two-way communication throughput between the transmitter and the receiver. In general, this problem is difficult; we therefore focus on two simplified models:
\begin{itemize}
\item \emph{The classical model:} In this case, $c_{tr} \simeq c_{rt}$, i.e., the communication throughputs of the forward and feedback links are of the same order. This case represents a generalization of the binary data scenario addressed in~\cite{VenkataramananSR13, venkataramanan2010interactive}, to ordinal information.
\item \emph{The limited feedback model:} In this case, we assume that $c_{tr}\gg c_{rt}$, or more precisely, that $c_{tr}=O(d \log\, n)$, and $c_{rt}=O(d \log \, d)$, where $n$ is the length of the ordinal message, while $d$ is the number of editing errors. Using the feedback link is costly, and for this channel, synchronization has to be achieved with a number of bit transmissions proportional to $d \, \log d$, but independent on the length of the message $n$.
\end{itemize}
Our main contributions are as follows. For $\sigma^{Y}$ and $\sigma^{X}$ mis-synchronized by deletions, we exhibit protocols within a factor of two and a factor of five from the genie-aided limits for $c_{tr} \simeq c_{rt}$ and $c_{tr}\gg c_{rt}$, respectively. When the synchronization error is a single translocation, a protocol within a factor of three from the genie-aided limit is proposed. For single transposition errors, we describe a one-way protocol within a factor of six from the genie-aided limit. This protocol uses generalization of Varshamov-Tenengolz and Reed-Solomon codes for ordinal information.

%In our model, we allow for an \emph{arbitrary number} of deletion or insertion errors, bursts of deletion errors, single translocation errors as well as single transposition errors. The simplification in the communication protocol that arises from using permutations rather than arbitrary sequences over a binary alphabet or small finite field is that missing symbols may be easily identified. On the other hand, anchoring is more total throughputly, since it requires sending encodings potentially of the order of $n$, the size of the permutation. Furthermore, our analysis for the special case of deletion/insertion synchronization protocols extends the framework in~\cite{venkataramanan2010interactive,VenkataramananSR13} that relies of Varshamov-Tenegolz (VT) coset encoding for binary data synchronization. The key observation is that the \emph{inversion vectors} of permutations may be used for (VT) coding. Unfortunately, VT coding does not suffice to efficiently recover from synchronization errors when communication over the feedback link is total throughputly, so that we propose combining VT encoding with checksum transmissions akin to those used in rsync. Checksums refer to the real-valued sums of subsets of entries. These encodings allow one to achieve a communication protocol complexity only a constant times higher than one derived using a genie-aided approach. For synchronization of translocations and transpositions, we propose new analytical schemes that utilizes stochastic dominance of random variables and cycle decompositions of permutations.

The paper is organized as follows. Section~\ref{sec:notation} contains the mathematical preliminaries and the problem formulation. Synchronization from deletions or insertions is analyzed in Section  \ref{sec: deletions}. A discussion of translocation and transposition error synchronization methods is presented in Section \ref{sec: translocation} and Section \ref{sec: transposition}, respectively.
\section{Notation and Preliminaries} \label{sec:notation}
%We introduce next the basic terminology and results used throughout the paper.
A permutation $\sigma: [n] \to [n]$ is a bijection over $[n]\triangleq \{1, \cdots, n\}$. The collection of all permutations on $[n]$ is denoted by $\mathbb{S}_n$.
%Permutations are denoted by lower-case Greek letters.
For any $\sigma \in \mathbb{S}_n$, we write $\sigma = (\sigma_1, \sigma_2,\cdots, \sigma_n)$, where $\sigma_i$ is the image of $i \in [n]$ under $\sigma$. The identity permutation $(1,2,\cdots, n)$ is denoted by $e$.

The projection of a permutation $\sigma$ onto a set $P \subseteq [n]$, denoted by $\sigma_P$, is obtained by removing all elements in $[n]\setminus P$ from $\sigma = (\sigma_1, \sigma_2,\cdots, \sigma_n)$. In particular, when $P=[n]$, $\sigma_P=\sigma$.
As an example, $(2,3,7,5,1)$ is the projection of a permutation over any $[n]$ for which $n \geq 7$ onto the set $P=\{1, 2, 3, 5, 7\}$. We tacitly assume that $n$ is either known in advance, or that it equals to the value of the largest element in the partial permutation. Which of these assumptions is used will be apparent from the context.
We frequently refer to projections as partial permutations and do not explicitly write the subscript $P$ unless required by the context.

Given $\sigma_P$, a deletion refers to removing an element in $P$ from $\sigma_P$. Similarly, an insertion refers to inserting an element in $[n]\setminus P$ into an arbitrary position of $\sigma_P$. A block of deletions or insertions of length $d$ corresponds to a set of deletions or insertions contained within $d$ consecutive positions.
A swap of two elements in a permutation is referred to as a \emph{transposition}. For example, the symbols $1$ and $2$ are transposed in $(2,1,3,4)$ when compared to the identity permutation $(1,2,3,4)$.
A pair of an insertion and a deletion involving the same element is termed a translocation~\cite{FarnoudSM13}, formally defined next.
\begin{definition}
A \emph{translocation} $\varphi{(i,j)}$ is a permutation defined as follows: If $i \le j$, we have \[\varphi{(i,j)}=(1, \cdots, i-1, i+1, \cdots,j-1, j, i, j+1, \cdots, n),\]
and if $i > j$, we have \[ \varphi{(i,j)}= (1, \cdots, j-1, i, j, j+1, \cdots, i-1, i+1, \cdots, n) \; . \] For $i\le j$, the permutation $\varphi{(i,j)}$ is called a \emph{right} translocation while the permutation $\varphi{(j,i)}$ is called a \emph{left} translocation. Translocations arise due to independent \emph{falls} and \emph{rises} of elements in a ranking.
\end{definition}
\begin{definition}
The inversion vector of $\sigma_P$, denoted by $\inv(\sigma_P)$, is a binary vector $(x_1,\cdots,x_{|P|-1})$, such that
\[ x_i=
\begin{cases}
1, & \text{if}~ \,\sigma_i>\sigma_{i+1};\\
0, & \text{if}~ \,\sigma_i<\sigma_{i+1}.
\end{cases}
\]
\end{definition}
%Given that an inversion vector is defined over a binary alphabet, it may be decomposed into runs of zeros or ones. %For simplicity, a substring of the inversion vector which is a run of $m$ identical bits $b$, $b \in \{{0,1\}}$, will be denoted by $b^m$.
%Each run corresponds to a monotone subsequence in the permutation domain, which is also terms as $\delta$--chain\footnote{A $\delta$--chain is said to be a increasing chain if $\delta=0$, otherwise it is termed a decreasing chain.} in \cite{Lev91}. For example, consider the permutation $\sigma=(1,3,5,4,2,6,7,8)$ with inversion vector $R(\sigma)=(0,0,1,1,0,0,0)$. The monotone subsequences in the permutation domain are (1,3,5), (5,4,2) and (2,6,7,8), respectively. In addition, if a $\delta$--chain contains $\sigma_j$, we say the $\delta$--chain contains the $j^{th}$ letter of $\sigma$.
In our subsequent analysis, we also make use  of Varshamov-Tenengolz codes VT$_a(n)\subseteq \{0,1\}^n$. These codes consist of all binary vectors $(x_1, \cdots, x_n)$ satisfying the congruence
\begin{equation}
\sum_{i=1}^{n} \, i\cdot x_i \equiv a \mod (n+1),
\end{equation}
where the parameter $a\in \{0,1,\cdots, n\}$ is referred to as the VT-syndrome of the code VT$_a(n)$.
%For example, the $VT_0(n)$ codes of length $n=4$ and $n=5$, with syndrome $a=0$, equal to
%\begin{align}
%&VT_0(4)=\{(0000),(1001),(0110),(1111)\}\\
%&VT_0(5)=\{(00000),(10001),(01010), (11011),
%(11100), (00111)\}\\
%\end{align}
VT-codes are single deletion error-correcting codes, which is easily proved by exhibiting a decoding algorithm~\cite{levenshtein1966bcc,Levenshtein65}.

The family of VT-codes partitions the space $\{0,1\}^{n}$ into $n+1$ single deletion correcting codes~\cite{Lev91}. A less known result holds for permutations, asserting that $\mathbb{S}_n$ may be partitioned into $n$ cosets of size $(n-1)!$, each of which has a unique VT-syndrome for all the inversion vectors. The cosets represent single deletion correcting codes for permutations. The key observations behind the proof of this fact are that: a) a single deletion in the permutation induces a single deletion in the inversion vector; b) a deletion in the inversion vector may be corrected via VT coding; and c) given a letter $b$ in $[n]\setminus P$ and a binary string $\mathcal{B}$ which produces the inversion vector $\inv(\sigma_P)$ via a single deletion, there is a unique way to insert $b$ into $\sigma_P$ such that the newly obtained partial permutation has inversion vector $\mathcal{B}$. %The latter claim follows immediately from lemma 3.1 in \cite{Lev91}.% Also based on lemma 3.1 in \cite{Lev91}, one can show the following result.

Throughout the paper, we assume that $n$ and the number of deletion (insertion) errors $d$ is known in advance both to the transmitter and receiver; that all $\binom{n}{d}$ deletion (insertion) patterns are equally likely; and that the transmitter and receiver can agree in advance on the steps of the synchronization protocol. For the case of block errors, we also assume that the span of the block $d$ is known both to the transmitter and receiver; and that all $d$-spans are equally likely. Due to the complicated nature of translocation and transposition errors, we focus only on single error events and relegate the generalization to multiple errors to the journal version of the paper. Although there is no fundamental limitation in allowing $d=O(n)$, for simplicity of exposition, we restrict our attention to the case $d=o(n)$.

\section{Synchronization from Deletions/Insertions} \label{sec: deletions}

The first problem we address is synchronization from deletion errors only. In this case, $\sigma^{Y}$ is generated from $\sigma^{X}$ by deleting $d$ symbols.
%, where $d$ can possibly be in the order of $O(n)$.
%\subsection{Model 1 -- identical link throughput}
\subsection{Synchronization from random deletions/insertions}\label{sec: randomdeletions}
Assume that $\sigma^{X}\in \mathbb{S}_n$ and that the transmitter is aided by a genie that knows the locations of the deleted symbols in the receiver's partial permutation $\sigma^{Y}$. Since there are $\binom{n}{d}$ possible positions for the $d$ deleted symbols and $d!$ possible orderings of the deleted symbols, the transmitter needs to send
\[
\log\, \binom{n}{d}d!=d(\, \log\, n +o(1))
\] bits, in order to enable the receiver to reconstruct $\sigma^{X}$. %Note that the transmitter does not need to communicate the order of the $d$ symbols, since he may agree with the receiver in advance to send the locations of symbols based on their natural ordering.

The solution in the classical setting is straightforward, described  in Protocol 1. The key observation is that the receiver can deduce the identity of the missing symbols, given that he knows $n$. Hence, the receiver sends $\log \binom{n}{d}$ bits to the transmitter indicating the missing symbols, and the transmitter in return sends the locations of the missing symbols along with their ordering. In this way, $\sigma^{X}$ can be reconstructed at the receiver with a total number of
\[
\log\,\binom{n}{d}+\log\, \binom{n}{d}d!= d(2\log\,n-\log\, d + O(1))
\]
transmitted bits, which is only twice as much as required by a genie-aided method. However, this approach cannot be used in the limited feedback scenario,
given that the throughput of the feedback link is not allowed to scale as $d\, \log \, n$.%
\begin{algorithm}
\caption{Identical Throughput Protocol}
\BlankLine
The receiver sends the identities of the $d$ deleted symbols\;
Transmitter $T$ sends the locations of the $d$ deleted symbols as well as their ordering.
\label{naive}
\end{algorithm}

We next propose a protocol for the limited feedback scenario that is within a factor of five from the genie-aided result.

As part of the protocol, the transmitter maintains a list $L_{\sigma^{X}}$, whose entries consist of the unsynchronized substrings of $\sigma^{X}$. This list is initialized to $L_{\sigma^{X}}=\{\sigma^{X}\}$. Similarly, the receiver maintains a corresponding list of unsynchronized substrings, denoted by$L_{\sigma^{Y}}$, initialized to $L_{\sigma^{Y}}=\{\sigma^{Y}\}$. The limited feedback protocol is described in Protocol 2.

\begin{algorithm}
\caption{Limited Feedback Protocol}
\BlankLine
Initialization: $L_{\sigma^{X}}\gets\{\sigma^{X}\}$, $L_{\sigma^{Y}}\gets\{\sigma^{Y}\}$, $i\gets0$\;
\While{$L_{\sigma^{X}}\not=\varnothing$ and $d>1$}
{\For{$i=1:1:|L_{\sigma^{X}}|$}
{Receiver requests the transmitter to send the central symbol of $L_{\sigma^{X}}(i)$\;
\eIf{Receiver cannot find a match for the central symbol}
{$d\gets d-1$\;}
{\eIf{the central symbol was not shifted to the left}
{There is no deletions in the left half of substring $L_{\sigma^{X}}(i)$}
{\eIf{the central symbol was shifted to the left by one}
{Receiver requests the VT-syndrome and the checksum $\Sigma$ of the left half of  substring $L_{\sigma^{X}}(i)$ and sets $d \gets d-1$\;}
{Receiver adds the left half of substrings $L_{\sigma^{X}}(i)$ and $L_{\sigma^{Y}}(i)$ to the lists $L_{\sigma^{X}}$ and $L_{\sigma^{Y}}$, respectively\;}
}
Repeat step 8--step 16 for the right half of substring $L_{\sigma^{X}}(i)$\;
}
Transmitter and receiver remove $L_{\sigma^{X}}(i)$ and $L_{\sigma^{Y}}(i)$ from $L_{\sigma^{X}}$ and $L_{\sigma^{Y}}$, respectively\;
}
}
\end{algorithm}
The idea of the protocol is to first partition $\sigma^{X}$ into a set of substrings each of which contains one deleted symbol, akin to~\cite{venkataramanan2010interactive}. Partitioning is achieved via a sequence of transmissions of a \emph{single anchor symbol}, positioned in the middle of substrings of interest. To correct a single deletion error within each substring, the receiver needs to know both the deleted symbol in that substring and the deleted position, which can be deduced from the \emph{checksum} and the VT-syndrome of the inversion vector of the substring, respectively. Here, the checksum of a substring refers to the sum of its corresponding symbols. The identity of the deleted symbol in a specified substring can be found by computing the difference of the checksum of the substring in $\sigma^{X}$ and the checksum of the corresponding noisy substring in $\sigma^{Y}$. Once the identities of the deleted symbols within the substrings are known to the receiver, synchronization is accomplished via VT coding.

Two observations are in place. Given that the data consists of distinct symbols, erroneous matching is not possible. The most costly steps of synchronization are checksum transmissions, all of which take place over the forward channel.

%Using the notation in~\cite{venkataramanan2010interactive}, let $N_{T\to R}(d)$~($N_{R\to T}(d)$) denote the total number of bits transmitted from the transmitter (receiver) to the receiver (transmitter). The performance of the above described limited feedback protocol is summarized in the next theorem.
\begin{theorem}
%Suppose that $\sigma^{Y}$ is obtained from $\sigma^{X}$ via $d$ deletions which occur uniformly at random.
Protocol 2 exactly restores $\sigma^{X}$ at the receiver, with
\begin{equation}
\mathbb{E}[N_{T\to R}(d)]\le (5d-2)\log n-2d\log d-d\log 2, \notag
\end{equation}
and
\begin{equation}
\mathbb{E}[N_{R\to T}(d)]\le 6(d-1). \notag
\end{equation}
\label{thm1}
\end{theorem}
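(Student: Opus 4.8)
The plan is to establish correctness first and then bound the two throughputs separately, since the randomness over deletion patterns only enters the latter.

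For correctness, the key structural fact to prove is a \emph{shift lemma}: within any substring $L_{\sigma^{X}}(i)$ with noisy image $L_{\sigma^{Y}}(i)$, the number of positions by which the central anchor of $L_{\sigma^{X}}(i)$ appears shifted to the left in $L_{\sigma^{Y}}(i)$ equals exactly the number of deletions that fell strictly to the left of that anchor, and the anchor fails to be matched precisely when it is itself the deleted symbol. Because the symbols are distinct, the matching in Step~5 is unambiguous, so this lemma lets the receiver decide correctly at every node of the recursion how many of the remaining deletions lie in the left half versus the right half, and hence partition the global budget $d$ correctly. Two further observations close the argument. First, a substring carrying exactly one deletion is reconstructed from its checksum and VT-syndrome: the difference of the checksums of $L_{\sigma^{X}}(i)$ and $L_{\sigma^{Y}}(i)$ equals the value of the deleted symbol, since distinctness forces that difference to lie in $[n]$, and the deleted position is then recovered by VT-decoding the inversion vector, using observations (a)--(c) of Section~\ref{sec:notation} that lift single-deletion VT-correction from the inversion vector to the partial permutation. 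Second, termination follows from a potential argument: each pass through the inner loop either decrements $d$ or replaces a substring by strictly shorter pieces, and substrings cannot shrink indefinitely, so the protocol halts with $L_{\sigma^{X}}=\varnothing$; the base case $d=1$ is handled by a single VT-syndrome/checksum exchange on all of $\sigma^{X}$.

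For the forward throughput I would decompose $N_{T\to R}$ into anchor transmissions, VT-syndrome transmissions, and checksum transmissions. An anchor of a length-$\ell$ substring costs at most $\log n$ bits; its VT-syndrome, a residue modulo $\ell$, costs at most $\log \ell$ bits; and its checksum is at most $\ell n$, hence at most $\log \ell+\log n$ bits. At most $d$ substrings are ever resolved by VT-decoding, and they are pairwise disjoint, so their lengths $\ell_1,\dots,\ell_d$ satisfy $\sum_i \ell_i\le n$; concavity of $\log$ gives $\sum_i \log \ell_i\le d\log(n/d)=d\log n-d\log d$, so the VT-syndromes and checksums together contribute at most $3d\log n-2d\log d$ bits. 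It then remains to bound the number of anchor probes, and showing that this number is at most $2(d-1)$ in expectation yields the remaining $(2d-2)\log n$; summing the three contributions and absorbing the lower-order terms gives the stated bound $(5d-2)\log n-2d\log d-d\log 2$. For the feedback link, every message the receiver sends is an $O(1)$-bit control signal (``send the central symbol of the current entry'', ``send the VT-syndrome and the checksum'', ``add the halves to the lists''), and the list entries are processed in a prearranged order so no indexing bits are needed; charging a constant number of feedback bits to each of the $O(d)$ probes of the recursion yields $\mathbb{E}[N_{R\to T}(d)]\le 6(d-1)$.

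The main obstacle is the bound on the expected number of anchor transmissions. In the worst case a single substring can be probed $\Theta(\log n)$ times --- when the deletions are tightly clustered, every halving step keeps at least two of them on the same side --- so no deterministic $O(d)$ bound exists, and the hypothesis that all $\binom{n}{d}$ deletion patterns are equally likely is essential. The idea is to analyze the recursion tree node by node: a node holding $k$ deletions spawns a (left) child only when at least two of its deletions land in its left half, and otherwise passes control to a single right child or terminates, while the global guard $d>1$ halts all further recursion once a single deletion remains. Conditioned on a length-$\ell$ substring holding $k$ deletions, the uniform model makes the number landing left of the median behave like a hypergeometric draw, so with constant probability each side receives at most one deletion; summing the resulting geometric tail bounds over the tree, and using that the tree has at most $d-1$ branching nodes because each branch strictly increases the number of surviving substrings (and each surviving substring ultimately consumes a distinct deletion), gives $\mathbb{E}[\#\text{anchors}]\le 2(d-1)$, which is exactly what the forward count needs.
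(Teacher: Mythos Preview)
Your decomposition $N_{T\to R}=N_c+N_v+N_s$, the correctness argument via the shift lemma, and the feedback accounting (a constant-size alphabet of control messages, one per anchor) all match the paper's approach. Two points need attention.

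First, a small one: your checksum bound $\log(\ell n)=\log\ell+\log n$ is slightly too crude to recover the $-d\log 2$ in the stated inequality; summing gives $(5d-2)\log n-2d\log d$, and the missing $d\log 2$ is not a lower-order term you can ``absorb.'' The paper writes $\log CS_j=\log l_j+\log(CS_j/l_j)$ and then uses that, under the uniform model on $\sigma^X$, $\mathbb{E}[CS_j/l_j]=\mathbb{E}[\sigma_1^X]=(n+1)/2$; Jensen's inequality then yields $d\log\frac{n+1}{2}$ rather than $d\log n$ for the checksum contribution, which is exactly the $-d\log 2$ you are missing.

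Second, and more substantively, your argument for $\mathbb{E}[\#\text{anchors}]\le 2(d-1)$ is not yet a proof. The claim ``with constant probability each side receives at most one deletion'' is false when the node carries $k\ge 4$ deletions (that probability tends to zero as $k$ grows), so the geometric-tail picture does not apply uniformly along the tree. The observation that there are at most $d-1$ branching nodes is correct, but you still owe a bound on the expected length of the non-branching chains, and for $k=3$ \emph{every} split is non-branching, so those chains are not automatically short. The paper avoids this entirely by a direct induction on $d$: write $\mathbb{E}[N_c(d,n)]$ by conditioning on the first anchor (either it was itself deleted, reducing to $d-1$ deletions, or the $d$ deletions split as $(j,d-j)$ with probability $2^{-d}\binom{d}{j}$), isolate the two boundary terms $j\in\{0,d\}$ that reproduce $\mathbb{E}[N_c(d,n)]$ on the right-hand side, move them to the left, and apply the inductive hypothesis $\mathbb{E}[N_c(k,n)]\le 2(k-1)\log n$ to the remaining terms. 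This gives $2(d-1)\log n$ without any tree combinatorics. If you prefer your tree route, you will need a sharper per-step progress measure than ``both halves receive at most one deletion''; otherwise, adopt the induction.
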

%The proof follows the same line as the proof in \cite{venkataramanan2010interactive} and is available in the full version.
%\begin{proof}
% As mentioned above, in the genie-aided case, $N_{T\to R}(d)=\log \binom{n}{d}+\log d!\approx d\log n$ bits, which can be achieved using $\textbf{Algorithm 2}$.
% \end{proof}
\vspace{-0.05in}
\begin{proof}
The protocol provides an exact solution, since one cannot make errors in the process of anchoring the central symbol.

When synchronizing from $d$ deletions, the total number of bits transmitted from the transmitter to the receiver until Protocol 2 terminates may be written as
\begin{equation}
N_{T\to R}(d)=N_{c}(d)+N_{v}(d)+N_{s}(d),
\end{equation}
where $N_{c}, N_{v}$ and $N_{s}$ represent the number of bits sent for the central anchor symbols, bits for the VT-syndrome of the inversion vector and bits for the checksums, respectively.

First, we show by induction that  for $d\ge 1$,
\begin{equation}
\mathbb{E}[N_{c}(d)] \le 2(d-1) \log n,
\label{induction1}
\end{equation}
where $N_{c}(0)=0$ by definition. Note that $N_c(d)$ depend both on the number of deletions and the length of the partial permutation. In our analysis, we write the dependence on $n$ explicitly as $N_c(d, n)$. In addition,
we observe that $N_c(d, n)$ is increasing in $n$.

Base Case:  $N_{c}(0,n)=N_{c}(1,n)=0$, and thus, $(\ref{induction1})$ holds.

Induction Hypothesis: Suppose that $\mathbb{E}[N_{c}(k,n)] \le 2(k-1)\log n, ~ \forall\, k\le d-1$.

Induction Step:  $\mathbb{E}[N_{c}(d,n)]$ can be rewritten by conditioning on the outcome of the first round of the protocol as:
\begin{align}
%\begin{split}
\mathbb{E}[N_{c}(d,n)]&=\log n+\frac{\binom{n-1}{d-1}}{\binom{n}{d}} \mathbb{E}[N_{c}(d-1,n-1)]\nonumber\\
&\quad+\frac{\binom{n-1}{d}}{\binom{n}{d}}\sum_{j=0}^{d}\frac{1}{2^d}\binom{d}{j} \big{(}\mathbb{E}[N_{c}(j,\lfloor \frac{n+1}{2}\rfloor)]+\mathbb{E}[N_{c}(d-j,\lfloor \frac{n+1}{2}\rfloor)]\big{)}\\
&\le \log n+\frac{d}{n} \mathbb{E}[N_{c}(d-1,n)]\nonumber\\
&\quad+\frac{\binom{n-1}{d}}{\binom{n}{d}}\sum_{j=0}^{d}\frac{1}{2^d}\binom{d}{j} \big{(}\mathbb{E}[N_{c}(j,n)]+\mathbb{E}[N_{c}(d-j,n)]\big{)}\\
%&=\log n+\frac{d}{n}\mathbb{E}[N_c(d-1,n)]+\frac{n-d}{n2^d}\sum_{j=0}^{d} \binom{d}{j} \big{(}\mathbb{E}[N_{c}(j,n)]+\mathbb{E}[N_{c}(d-j,n)]\big{)}\nonumber\\
&=\log n+\frac{d}{n}\mathbb{E}[N_c(d-1,n)]\nonumber\\
&\quad+\frac{n-d}{n2^d}\Big{(}2\mathbb{E}[N_c(d,n)]+\sum_{j=1}^{d-1} \binom{d}{j} \big{(}\mathbb{E}[N_{c}(j,n)]+\mathbb{E}[N_{c}(d-j,n)]\big{)}\Big{)}
%\end{split}
\end{align}
where the first term in (4) accounts for the encoding of the central symbol. With probability $\frac{\binom{n-1}{d-1}}{\binom{n}{d}}$, the central symbol may have been deleted. In this case, the problem reduces to the $d-1$ deletions synchronization scenario, since we can simply insert this central symbol back to the central position after synchronizing the remaining $d-1$ deletions. This also explains the second term in (4). The third term in (4) follows from that fact that if the central symbol is successfully matched in $\sigma^{Y}$, with probability $\frac{1}{2^d}$ there are $j$ deletions in the left half of $\sigma^{X}$ and $d-j$ deletions in the right half of $\sigma^{X}$, where $0\le j\le d$. This holds since all $\binom{n}{d}$ deletion patterns are equally likely.  Inequality (5) is true because $N_c(d, n)$ is decreasing in $n$. From (6), we get
\begin{align}
[1-\frac{n-d}{n2^{d-1}}]\mathbb{E}[N_c(d,n)]&\le\log n+\frac{d}{n}\mathbb{E}[N_c(d-1,n)]\\
&\quad+\frac{n-d}{n2^d}\sum_{j=1}^{d-1}\binom{d}{j}\big{(}\mathbb{E}[N_{c}(j,n)]+\mathbb{E}[N_{c}(d-j,n)]\big{)}\nonumber\\
&\le \log n+\frac{d}{n}2(d-2)\log n+\frac{n-d}{n2^d}\sum_{j=1}^{d-1} \binom{d}{j}2(d-2)\log n,
\end{align}
where (8) follows from the induction hypothesis.  For $d\le 2$, we have
\begin{align}
\mathbb{E}[N_c(d,n)]&\le \frac{1+\frac{d}{n}2(d-2)+\frac{n-d}{n2^d}2(d-2)\sum_{j=1}^{d-1}\binom{d}{j}}{1-\frac{n-d}{n}2^{-(d-1)}}\log n\nonumber\\
%&=\frac{1+2(d-2)\big{(}\frac{d}{n}+\frac{n-d}{n2^d}(2^d-2)\big{)}}{1-\frac{n-d}{n}2^{-(d-1)}}\log n\\
%&=\frac{1+2(d-2)(1-\frac{n-d}{n2^{d-1}})}{1-\frac{n-d}{n}2^{-(d-1)}}\log n\\
&\le 2(d-1)\log n.\nonumber
\end{align}
Denote the number of anchors that have no match in $\sigma^{Y}$ by $M$, and the lengths of substrings $\sigma^{X}$ that contain single deletion errors by $l_1, \cdots, l_{d-M}$. The transmitter needs to send the $VT$--syndromes and encoding of the sums $CS_j$, where $j=1, \cdots, d-M$, for each of $d-M$ substrings that contain a single deletion. Note that the $l_j$'s and $CS_j$'s are correlated random variables. We hence have

\begin{align}
\mathbb{E}[N_v+N_s]&=\mathbb{E}[\sum_{j=1}^{d-M}\log (l_j+1)+\sum_{j=1}^{d-M}\log CS_j]\\
&\le\mathbb{E}[\sum_{j=1}^{d}\log (l_j+1)+\sum_{j=1}^{d}\log CS_j]\\
&\le\mathbb{E}[\sum_{j=1}^{d}\log (l_j+1)+\sum_{j=1}^{d}\log \frac{CS_j}{l_j}l_j]\\
&\le \mathbb{E}[\sum_{j=1}^{d}2\log (l_j+1)+\sum_{j=1}^{d}\log \frac{CS_j}{l_j}]\\
%&\le2d\mathbb{E}[\frac{1}{d}\sum_{j=1}^{d}\log (l_j+1)]+d\mathbb{E}[\frac{1}{d}\sum_{j=1}^{d}\log CS_j]\\
&\overset{(a)}{\le} 2d\mathbb{E}[\log \frac{\sum_{j=1}^{d}(l_j+1)}{d}]+d\mathbb{E}[\log \frac{\sum_{j=1}^{d}(\frac{CS_j}{l_j})}{d}]\\
&\overset{(b)}{\le} 2d\log \mathbb{E}[\frac{\sum_{j=1}^{d}(l_j+1)}{d}]+d\log \mathbb{E}[\frac{\sum_{j=1}^{d}(\frac{CS_j}{l_j})}{d}],
%&= 2d\log \mathbb{E}[\frac{\sum_{j=1}^{d}(l_j+1)}{d}]+d\log \mathbb{E}[\frac{\sum_{j=1}^{d}(\frac{CS_j}{l_j})}{d}]
\end{align}
where $(a)$ is a consequence of  the concavity of $\log$ and $(b)$ follows from Jensen's inequality. In addition, it is easy to see $\sum_{j=1}^{d}(l_j+1)\le n$. %Next we bound $\mathbb{E}[\sum_{j=1}^{d}(\frac{CS_j}{l_j})]$, suppose $D_j$ is the largest value among all elements in piece $j$. Then we have,

\begin{align}
\mathbb{E}[\frac{\sum_{j=1}^{d}(\frac{CS_j}{l_j})}{d}]&=\frac{1}{d}(\sum_{j=1}^{d}(\mathbb{E}[\frac{CS_j}{l_j}]))\\
&=\frac{1}{d}\sum_{j=1}^{d}(\frac{\sum_{i=1}^{l_j}\mathbb{E}[\sigma^{X}_{j_i}]}{l_j})\\
&=\mathbb{E}[\sigma^{X}_1]=\frac{n+1}{2}.
\end{align}

% so $\mathbb{E}[N_{c}(d,n)]\le d\log n$ and $\mathbb{E}[N_{s}(d,n)]\le 2d\log n$ bits.

Therefore,
\[
\begin{split}
\mathbb{E}[N_{T\to R}(d,n)]&=\mathbb{E}[N_{c}(d,n)]+\mathbb{E}[N_{v}(d)]+\mathbb{E}[N_{s}(d)]\\
&\le 2(d-1)\log n+2d\log \frac{n}{d}+d\log \frac{n+1}{2}\\
&=(5d-2)\log n-2d\log d-d\log 2+o(1).
\end{split}
\]
%One may argue that instead of sending out the sum of each segments, difference encoding can help reduce the multiplier constant $5$. However, in our case, (1) the length of the segments may not equal and (2) the concentration around the mean is ``weak", since the standard deviation is in the order\footnote{Given a segment of length $l$, the variance of the sum is roughly $\frac{ln^2}{12}-\frac{nl(l-1)}{3}$. } of $O(n)$.
Let $\sigma^{X}_l(i)$ and $\sigma^{X}_r(i)$ be the left half and the right half of $\sigma^{X}(i)$, respectively. Denote the VT-syndromes of the left and right half of the substrings by $VT_l$ and $VT_r$, respectively, and use a similar notation for the checksums of the substrings, namely $CS_l$ and $CS_r$.
On the feedback link, the receiver sends out at each round the encoding of one of the nine messages:\\
(1)``failed to find a match";\\
(2)``parse $\sigma^{X}_l(i)$ and $\sigma^{X}_r(i)$";\\
(3)``parse $\sigma^{X}_l(i)$ and send $VT_r$";\\
(4)``parse $\sigma^{X}_l(i)$ and send $CS_r$";\\
(5)``send $VT_l$ and parse $\sigma^{X}_r(i)$";\\
(6)``send $VT_l$ and $VT_r$";\\
(7)``send $VT_l$ and $CS_r$";\\
(8)``send $CS_l$ and parse $\sigma^{X}_r(i)$";\\
(9)``send $CS_l$ and $VT_r$".\\
The number of bits transmitted by the receiver is at most three bits at each round. Therefore,
\begin{align}
\mathbb{E}[N_{R\to T}(d)]&\le 3 \frac{\mathbb{E}[N_c(d)]}{\log n}=6(d-1).
\end{align}
%Note that such encoding of the 9 messages are not optimal, actually by applying the optimal variable length codes, this upper bound can be further tightened.
\end{proof}

%To this end, we restrict our attention solely to permutation data. The results generalizes naturally to partial permutations with small overhead when the length of the partial permutation $|P|=O(n)$. For the case when $|P|=o(n)$, i.e., the length of the data is much shorter than the size of the alphabet, the receiver cannot deduce the identities of the deleted symbols. Thus, with the information provided by a genie, the transmitter still needs to send the identities of the deleted symbols. The genie-aided lower bound becomes
%\[
%2\log\binom{n}{d}+\log d!\approx 2d\log n.
%\]
%
%It also needs to be pointed out that algorithm 1 is no longer order optimal when $|P|=o(n)$, since the receiver first needs to send the encoding of the $(n-|P|+d)$ symbols, which will needs $O(n\log n)$ bits to represent.

For the case of insertion errors, the situation is reversed in so far that the transmitter is in possession of a partial permutation, while the receiver contains a permutation. Interestingly, one only needs to identify the inserted symbols, since their positions are automatically revealed thereafter. This reduces the total number of transmitted bits by $d\log n$.

\subsection{Block deletions/insertions}\label{sec: burstdeletions}
We consider next the problem of synchronizing from block deletions. Since deletions occur in consecutive order, the receiver only needs to know the first or the last edited position, as well as the arrangement of the $d$ deleted symbols. In the genie-aided case, the required number of transmitted bits equals
\[
\log (n-d+1) +\log d!=\log n+d\log d+O(d).
\]

Clearly, the deletion synchronization method described in the previous section also applies to the block deletion case.
However, the communication throughput for the random deletion protocol may be significantly higher than needed, given that the deletions appear in consecutive positions. To see this, consider an example with $d=2$. On average, the random synchronization protocol communicates $O(\log^2n)$ bits and $O(\log n)$ bits through the forward link and the feedback link, respectively. The protocol we propose next only requires a $O(\log\, d\, \log\,n)$ throughput on the forward link.

We start by introducing the process of deinterleaving.
%(1) Since the span of the burst deletion is $d$, by interleave the permutations $\sigma^{X}$ and $\sigma^{Y}$ into $d$ pair of substrings to reduce the problem into $d$ single deletion synchronization problem and (2) to utilize the error structure, which is termed \emph{burst property} in \cite{VenkataramananSR13}.
In the deinterleaving process, $\sigma^{X}$ and $\sigma^{Y}$ are parsed into $d$ subsequences $(\sigma^{X})^k$ and $ (\sigma^{Y})^k$ of the form
%Let us interleave $\sigma^{X}$ and $\sigma^{Y}$ into one string comprising $d$ pairs of substrings $((\sigma^{X})^k, (\sigma^{Y})^k)$ of the form
\[
\begin{split}
(\sigma^{X})^k=(\sigma^{X}_k, \sigma^{X}_{k+d}, \sigma^{X}_{k+2d},\cdots);\\
(\sigma^{Y})^k=(\sigma^{Y}_k, \sigma^{Y}_{k+d}, \sigma^{Y}_{k+2d},\cdots),
\end{split}
\]
where, for $k=1,\cdots,d$,  $(\sigma^{X})^k$ and $(\sigma^{Y})^k$ are mis-synchronized by one deletion only. For instance, suppose that
the transmitter stores
\[
\sigma^{X}=( 1, 14, 12, 2, \textcolor{red}{3, 4, 9}, 10, 11, 13, 5, 8, 7, 6, 15 ),
\] while the noisy version available at the receiver reads as
\[
\sigma^{Y}=(1, 14, 12, 2, 10, 11, 13, 5, 8, 7, 6, 15).
\]
The above described parsing method results in:
\[
\begin{split}
&(\sigma^{X})^1, (\sigma^{Y})^1=(1, 2, \textcolor{red}{9}, 13, 7), (1, 2, 13, 7),\\
&(\sigma^{X})^2, (\sigma^{Y})^2=(14, \textcolor{red}{3}, 10, 5, 6), (14, 10, 5, 6),\\
&(\sigma^{X})^3, (\sigma^{Y})^3=(12, \textcolor{red}{4}, 11, 8, 15), (12, 11, 8, 15).
\end{split}
\]

 The resulting ``single" deletion synchronization can be done via \emph{one-way communication} by letting the transmitter send out the VT-syndromes and checksums for each of the $d$ substrings $(\sigma^{X})^k$, for $1\le k\le n$. The total number of transmitted bits is
\[
N_{T\to R}(d)+N_{R\to T}(d)=N_{T\to R}(d)=3d\log n.
\]
As presented in Protocol 3, the total communication throughput can be improved to $O(\log d\log n)$, with $O(\log d)$ bits transmitted on the feedback link.
The key idea is to utilize the error structure. Denote the position of the symbol deleted in $(\sigma^{X})^i$  by $p_i$. If a deletion in $(\sigma^{X})^1$ occurred at position $j$, i.e., if $p_1=j$, then for $i\ge 2$, $p_i$ equals either $j$ or $j-1$, which is a consequence of the fact that deletions occur in consecutive order.
%In particular, if the first deleted symbol is contained in $(\sigma^{X})^1$, then $p_i=j$ for all $i$.
In particular, the sequence $\{p_i\}_{i=1}^d$ equals
 \begin{align}
 (p_1,\cdots, p_{k-1}, p_{k},\cdots, p_d)=(j, \cdots, j, j-1, \cdots, j-1)\nonumber,
 \end{align}
%%%%%%%%%%%%%%%%%%%%%%%%%
where $k$ denotes the index of the subsequence of $\sigma^{X}$ containing the first deleted symbol. Note that we may have $k=d+1$, implying that the first deleted symbol is contained in $(\sigma^{X})^1$.
%Such $k$ can be easily found by the binary search algorithm.
It is straightforward to see that the first deleted position $p^*$ equals
\[
p^*=(j-1)d+1, \; \text{if} \; p_1=p_d, \; \text{and}
\]
\[
p^*=(j-1)d+k-d=(j-2)d+k, \; \text{otherwise}.
\]
%This observation, along with the deinterleaving approach, is used to establish the following theorem, the proof of which may be found in the full version of the paper~\cite{sumilenkovic}.
\begin{algorithm}
\caption{Block Deletion Protocol}
\BlankLine
Initialization: $m\gets0$, $t\gets0$ and $\mathcal{I}^{(0)}\gets\{i_1,\cdots, i_d\}=\{1, \cdots, d\}$\;
Transmitter sends the VT-syndromes and the checksums for $(\sigma^{X})^1$ and $(\sigma^{X})^d$\;
Receiver recovers $(\sigma^{X})^1$ from $(\sigma^{Y})^1$ and computes $p_1$\;
Receiver recovers $(\sigma^{X})^d$ from $(\sigma^{Y})^d$ and computes $p_d$\;
\eIf{$p_1=p_d$}
{Receiver sends ``FOUND" and $k=1$\;}
{\While {$\mathcal{I}^{(t)}$ is not singleton}
{$m\gets \lceil\frac{|\mathcal{I}^{(t)}|}{2}\rceil$\;
Transmitter sends the VT-syndrome and the checksum of  $(\sigma^{X})^m$\;
Receiver recovers $(\sigma^{X})^m$ from $(\sigma^{Y})^m$ and computes $p_m$\;
\eIf{$p_1>p_m=p_d$}
{$\mathcal{I}^{(t+1)}\gets \{i^{(t)}_1,\cdots, i^{(t)}_{m}\}$}
{$\mathcal{I}^{(t+1)}\gets \{i^{(t)}_{m},\cdots, i^{(t)}_{|\mathcal{I}^{(t)}|}\}$}
$t\gets t+1$\;
Receiver sends ``NOT FOUND"\;}
Receiver sends ``FOUND" and\\
\eIf{$p_1>p_m=p_d$}
{sends $k=m$\;}
{sends $k=m+1$.}
}
\label{burstprotocol}
\end{algorithm}

\begin{theorem}
Protocol 3 exactly restores $\sigma^{X}$ at the receiver, with
%with the number of bits transmitted through the forward link satisfying
\begin{align}
\mathbb{E}[N_{T\to R}(d)]&= 3\log d \, \log n+6\log n+\log d!-\frac{2\log d}{d}\nonumber,\\
var[N_{T\to R}(d)]&=\frac{9(d-1)}{d^2}\log^2d \, \log^2n\nonumber,
\end{align}
and %the number of bits transmission through the feedback link satisfying
\begin{align}
\mathbb{E}[N_{R\to T}(d)]&=\frac{2d-1}{d}\log d\nonumber,\\
var[N_{R\to T}(d)]&=\frac{d-1}{d^2}\log^2d.
\end{align}
\end{theorem}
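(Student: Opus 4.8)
The plan is to establish, in order, exactness of Protocol~3, then the forward throughput, then the feedback throughput, with all of the randomness ultimately carried by a single Bernoulli event.

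\emph{Exactness.} I would first note that deinterleaving turns one block deletion of span $d$ into exactly $d$ single--deletion instances, one per subsequence $(\sigma^{X})^{k}$, each of which is solved without error: the VT--syndrome of $\inv\bigl((\sigma^{X})^{k}\bigr)$, together with the (known) length of the subsequence, determines the deleted position by VT decoding, and the checksum difference $CS^{X}_{k}-CS^{Y}_{k}$ determines the deleted symbol, so $(\sigma^{X})^{k}$ is recovered exactly from $(\sigma^{Y})^{k}$. Then I would use the error structure recorded just above the protocol: $(p_{1},\dots,p_{d})=(j,\dots,j,j-1,\dots,j-1)$ for some value $j$ and transition index $k$. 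Recovering the two terminal subsequences $(\sigma^{X})^{1},(\sigma^{X})^{d}$ yields $j=p_{1}$ and tells us which case holds. If $p_{1}=p_{d}$ the block is $d$--aligned and $k$ is known outright; otherwise $p_{1}\ge\cdots\ge p_{d}$ is a $0/1$--monotone sequence with a single transition among the $d-1$ candidate indices $\{2,\dots,d\}$, which the binary search of Protocol~3 pins down, since each queried $p_{m}\in\{j-1,j\}$ tells it which half contains $k$. From $j$ and $k$ the first deleted position $p^{*}$ follows from the displayed formula, hence the block of erased positions of $\sigma^{X}$ is known; together with the set of missing symbols (which the receiver computes from $n$ and $\sigma^{Y}$) and their ordering, transmitted in $\lceil\log d!\rceil$ bits, this reconstructs $\sigma^{X}$ exactly. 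Distinctness of the symbols rules out spurious matches, so no step can err.

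\emph{Forward and feedback throughput.} I would write $N_{T\to R}(d)=3\log n\,(2+R)+\lceil\log d!\rceil$, where the ``$2$'' counts the always--transmitted subsequences $(\sigma^{X})^{1},(\sigma^{X})^{d}$, $R$ is the number of binary--search rounds, each subsequence transmission is budgeted at $3\log n$ bits (a VT--syndrome of a length--$\le n$ vector is $\le\log n$ bits, a checksum is $\le n^{2}$, i.e.\ $\le 2\log n$ bits), and the closing $\lceil\log d!\rceil$ bits carry the ordering. The only random quantity is $R$: when $p_{1}=p_{d}$ one has $R=0$, and otherwise the binary search halves the $(d-1)$--element candidate set each round and terminates after $\lceil\log_{2}(d-1)\rceil$ rounds, which I treat as $\log d$ (exact when $d$ is a power of two). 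It remains to compute $\Pr[p_{1}=p_{d}]$, which is exactly the probability that the span of $d$ consecutive deleted positions begins at a position $\equiv 1\pmod d$; under the uniform--span assumption this fraction of the $n-d+1$ possible spans equals $1/d$ (exactly when $d\mid n+1$, and $\to 1/d$ otherwise since $d=o(n)$). Hence $R=\log d$ with probability $1-1/d$ and $R=0$ with probability $1/d$, so $\mathbb{E}[R]=(1-1/d)\log d$ and $\mathrm{var}[R]=\tfrac1d(1-\tfrac1d)\log^{2}d=\tfrac{d-1}{d^{2}}\log^{2}d$. Substituting into $N_{T\to R}(d)$ gives the claimed $\mathbb{E}[N_{T\to R}(d)]$ and $\mathrm{var}[N_{T\to R}(d)]=(3\log n)^{2}\mathrm{var}[R]=\tfrac{9(d-1)}{d^{2}}\log^{2}d\,\log^{2}n$. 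On the reverse link the receiver sends one ``NOT FOUND'' bit per binary--search round ($R$ bits total) and, at termination, one ``FOUND'' bit plus the value $k$ ($\approx\log d$ bits), and in the aligned case only the terminal ``FOUND'' and $k$; absorbing the $O(1)$ flag bits, $N_{R\to T}(d)=\log d+\log d\cdot\mathbf{1}[p_{1}\ne p_{d}]$, which is $\log d$ with probability $1/d$ and $2\log d$ with probability $1-1/d$, giving $\mathbb{E}[N_{R\to T}(d)]=\tfrac1d\log d+(1-\tfrac1d)2\log d=\tfrac{2d-1}{d}\log d$ and $\mathrm{var}[N_{R\to T}(d)]=\log^{2}d\cdot\tfrac1d(1-\tfrac1d)=\tfrac{d-1}{d^{2}}\log^{2}d$.

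\emph{Main obstacle.} The delicate part is the exact probabilistic bookkeeping. One must (i) prove rigorously that a length--$d$ consecutive block, after deinterleaving, produces the $0/1$--monotone pattern $(p_{1},\dots,p_{d})$ with the transition occurring precisely at the subsequence that carries $\sigma^{X}_{p^{*}}$, so that exactly one of the two cases (aligned versus binary--searchable) arises and the binary search is correct; and (ii) establish the clean identity $\Pr[p_{1}=p_{d}]=1/d$, a counting statement about the residues modulo $d$ of the span's starting position. Once these are settled, the rest is deterministic accounting --- fixing the per--message budgets $3\log n$, $\lceil\log d\rceil$ and $\lceil\log d!\rceil$ --- so that all randomness reduces to the single Bernoulli variable $\mathbf{1}[p_{1}\ne p_{d}]$, which is exactly what makes the closed--form mean and variance possible. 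The binary--search depth should be carried as $\lceil\log_{2}(d-1)\rceil$ and treated as $\log d$ (or $d$ assumed a power of two) to match the stated equalities.
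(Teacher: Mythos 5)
Your proposal is correct and follows essentially the same route as the paper's proof: condition on the aligned event $p_1=p_d$ (probability $1/d$ under the uniform-span assumption), charge $3\log n$ bits (VT-syndrome plus checksum) per transmitted subsequence with two subsequences always sent and $\log d$ further binary-search rounds otherwise, account for the ordering bits $\log d!$ and the ``FOUND''/``NOT FOUND''-plus-$k$ feedback, and reduce all randomness to the single Bernoulli indicator $\mathbf{1}[p_1\ne p_d]$, which yields exactly the paper's variance expressions and feedback mean. The only mismatch is the low-order correction in $\mathbb{E}[N_{T\to R}(d)]$ --- your accounting gives a term $-\frac{3\log d\,\log n}{d}$ (and keeps $\log d!$ in both cases) rather than the stated $-\frac{2\log d}{d}$ --- but the paper's own displayed conditional expectation does not reproduce its stated formula either, so this reflects imprecision in the theorem statement rather than a gap in your argument.
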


\begin{proof}
%Part (a) of the claim is obvious, since one cannot make errors in the process of anchoring the central symbol. We hence focus on the proof of part (b).

When $(\sigma^{X})^1$ and $(\sigma^{Y})^1$ are synchronized, the receiver knows that $(p_1-1)d+1$ is in the span of the block deletion. Since all $n-d+1$ block deletions patterns may have occurred equally likely, with probability $\frac{1}{d}$, $(p_1-1)d+1$ is the first edited position, which can be detected by the receiver via comparing $p_1$ with $p_d$. In this case, Protocol 3 terminates with step 6, and $N_{T\to R}(d)=6\log n$, $N_{R\to T}(d)=\log d$. Otherwise, the protocol goes though the \textbf{while} command, which terminates at round $\log d$ when $\mathcal{I}^{(\log d)}$ is a singleton. In the latter case, we have
\[
N_{T\to R}(d)=3\log d \log n+6\log n
\]
and
\[
N_{R\to T}(d)=\frac{d-1}{d}2\log d.
\]
Then
\begin{align}
\mathbb{E}[N_{T\to R}(d)]&=\frac{1}{d}6(\log n+\log d!)+\frac{d-1}{d}(6\log n+3\log d\log n+\log d!)\nonumber\\
&=3\log d \, \log n+6\log n+\log d!-\frac{2\log d}{d}\nonumber
\end{align}
The expressions $var[N_{T\to R}(d)]$ and $var[N_{R\to T}(d)]$ may be derived similarly.
\end{proof}

\section{A Single Translocation: A Pair of a Deletion and an Insertion}\label{sec: translocation}
On a permutation of length $n$, one can perform as many as $(n-1)^2$ different translocations. Thus, in the genie-aided case, $2\log (n-1)=2\log n+o(1)$ bits need to be transmitted. We describe next a protocol that is within factor of three from the genie-aided limit.
%order-optimal total communication complexity is proposed.

First, observe that a single translocation error is equivalent to a deletion and an insertion of the same symbol~\cite{FarnoudSM13}. Hence, the idea is to partition $\sigma^{X}$ in such a way that the deletion error and the insertion error are contained in different substrings of $\sigma^{X}$. To correct the transposition, we use the fact that VT-codes for permutations are capable of \emph{detecting} single translocations.

Let $S_{\sigma^{X}}$ and $S_{\sigma^{Y}}$ be the to-be-parsed substrings of $\sigma^{X}$ and $\sigma^{Y}$, respectively.
%The synchronization protocol is described below.
\vspace{-0.15in}
\begin{algorithm}
\caption{Protocol for Single Translocation}
\BlankLine
Initialization: $S_{\sigma^{X}}\gets \sigma^{X}$, $S_{\sigma^{Y}}\gets \sigma^{Y}$\;
Transmitter sends the central symbol of $S_{\sigma^{X}}$\;
Receiver anchors the central symbol in $S_{\sigma^{Y}}$\;
\If{the central symbol was not shifted}
{The receiver requests $VT_l(S_{\sigma^{X}})$\;
\eIf{$VT_l(S_{\sigma^{X}})\not=VT_l(S_{\sigma^{Y}})$}
{$S_{\sigma^{X}}\gets \sigma^{X}_l$, $S_{\sigma^{Y}}\gets \sigma^{Y}_l$, go to step 2\;}
{$S_{\sigma^{X}}\gets \sigma^{X}_r$, $S_{\sigma^{Y}}\gets \sigma^{Y}_r$, go to step 2\;}
}
\eIf{the central symbol was shifted by one position to the left}
{The receiver requests $CS_r(S_{\sigma^{X}})$ and $VT_l(S_{\sigma^{X}})$, uses $CS_r(S_{\sigma^{X}})$ to synchronize the insertion in the right part of $S_{\sigma^{Y}}$ and uses $VT_l(S_{\sigma^{X}})$ to synchronize the deletion in the left part of $S_{\sigma^{Y}}$\;}
{The receiver requests $CS_l(S_{\sigma^{X}})$ and $VT_r(S_{\sigma^{X}})$, uses $CS_l(S_{\sigma^{X}})$ to synchronize the insertion in the left part of $S_{\sigma^{Y}}$ and uses $VT_r(S_{\sigma^{X}})$ to synchronize the deletion in the right part of $S_{\sigma^{Y}}$.}
\label{translocation}
\end{algorithm}

The protocol starts with the transmitter sending the central symbol of $\sigma^{X}$, i.e., the symbol at position $\lceil \frac{n}{2}\rceil$ in $\sigma^{X}$, to the receiver. The receiver examines whether the position of the received symbol is $\lceil \frac{n}{2}\rceil$ in $\sigma^{Y}$. If not, the received symbol is within the span of the translocation, and a deletion occurred in the left half of $\sigma^{X}$, and an insertion occurred in the right half of $\sigma^{X}$, or vice versa. If the received symbol is accurately anchored at $\lceil \frac{n}{2}\rceil$, the protocol uses the VT-syndrome to determine which half of $\sigma^{X}$ contains the translocation. The process is repeated for the substring that contains the translocation error.
\begin{theorem}
Protocol 4 exactly restores $\sigma^{X}$ at the receiver, with the number of bits transmitted through the forward link satisfying
\begin{align}
\mathbb{E}[N_{T\to R}]&\le 6\log n,\\
var[N_{T\to R}]&\le 8\log^2 n+O\big{(}\frac{\log^2n}{n}\big{)},
\end{align}
and the number of bits transmitted through the feedback link satisfying
\begin{align}
\mathbb{E}[N_{R\to T}]&\le 6,\\
var[N_{R\to T}]&\le 18+O\big{(}\frac{1}{n}\big{)}.
\end{align}
\label{thmtranslocation}
\end{theorem}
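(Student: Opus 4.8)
I would settle exactness first. Because all symbols are distinct, the anchoring step --- comparing the position of the central symbol inside the current $S_{\sigma^X}$ with its position inside $S_{\sigma^Y}$ --- never misfires. If the anchor is not displaced, the fact that VT--codes for permutations \emph{detect} single translocations yields $VT_l(S_{\sigma^X})\neq VT_l(S_{\sigma^Y})$ if and only if the translocation lies in the left half, so the recursion always descends into the half that actually carries the error. If the anchor is displaced by one position, the translocation is a deletion together with an insertion of the \emph{same} symbol $s$: the receiver recovers $s$ from the difference of the two checksums on the insertion side (where $s$ is the extra symbol, of known position), and then, knowing that $s$ is exactly the symbol missing on the deletion side, uses the requested VT--syndrome together with the unique-reinsertion property recalled in Section~\ref{sec:notation} to pin down the reinsertion point. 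Hence $\sigma^X$ is restored exactly. The only awkward configurations are those in which the anchor coincides with an endpoint of the translocation or with the splitting position; in each of these the anchor is displaced by \emph{more} than one position, which in fact reveals both edited positions outright and costs strictly less, so the bounds below are unaffected.

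For the communication cost I would collapse everything onto a single nonnegative integer random variable $K$: the number of \emph{recursion rounds} (rounds in which the anchor is not displaced) before termination. Since the span $[\min(i,j),\max(i,j)]$ of the translocation is fixed while every recursion round replaces the working substring by a half of it, termination is forced and $0\le K\le\lceil\log_2 n\rceil$. I would charge a fixed budget per round: a recursion round sends the anchor ($\le\log n$ bits) together with one VT--syndrome of a half ($\le\log n$ bits); the terminal round sends the anchor, one checksum (at most $\binom{n+1}{2}<n^2$, i.e.\ $\le 2\log n$ bits) and one VT--syndrome ($\le\log n$ bits); and the receiver answers each round with one out of a constant-size list of instructions, fitting into $3$ bits. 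Treating these budgets (and the halving) as exact --- the resulting corrections are lower order and are what produce the stated error terms --- the two counts become affine functions of $K$:
\[
N_{T\to R}= 2\log n\,(K+2),\qquad N_{R\to T}= 3\,(K+1).
\]

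The crux is the law of $K$. Conditioned on having reached round $t$, the working substring carries exactly one translocation which, by the uniformity assumption over the $(n-1)^2$ translocations, is again uniform among the translocations supported inside that substring --- descending into a half that still contains the full span preserves uniformity on the smaller ground set. Round $t$ recurses precisely when the centre of the current length-$m$ substring lands strictly outside the span, i.e.\ both endpoints fall on the same side of the centre; for a uniform translocation this probability is $(\tfrac12)^2+(\tfrac12)^2-O(1/m)=\tfrac12-O(1/n)$, so $\Pr[K\ge m\mid K\ge m-1]\le\tfrac12$. Summing the truncated geometric tail gives $\mathbb{E}[K]\le 1$ and $\mathbb{E}[K^2]\le 3$, hence $var[K]\le 2$; carrying the explicit $O(1/m)$ corrections one gets $\mathbb{E}[K]=1-O(\tfrac{\log n}{n})$ and $var[K]=2+O(\tfrac{\log^2 n}{n})$. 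Feeding these through the affine relations above gives $\mathbb{E}[N_{T\to R}]=2\log n(\mathbb{E}[K]+2)\le 6\log n$, $var[N_{T\to R}]=4\log^2 n\,var[K]\le 8\log^2 n+O(\log^2 n/n)$, $\mathbb{E}[N_{R\to T}]=3(\mathbb{E}[K]+1)\le 6$ and $var[N_{R\to T}]=9\,var[K]\le 18+O(1/n)$.

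The step I expect to be the main obstacle is the precise control of $\Pr[\text{round }t\text{ recurses}\mid\text{round }t\text{ reached}]$: one must (i) justify ``memorylessness'', namely that conditioning on the descent so far genuinely leaves a \emph{uniform} single-translocation subinstance on the shorter string, despite the mild non-uniformity created by identifying adjacent transpositions among the $(n-1)^2$ translocations; (ii) bound the $O(1/m)$ discrepancies from odd/even substring lengths and from endpoint/splitting coincidences well enough that, accumulated over the $\le\lceil\log_2 n\rceil$ levels, they remain within the stated $O(\log^2 n/n)$ and $O(1/n)$ error terms; and (iii) keep the reduction ``$N$ is affine in $K$'' honest by committing to a fixed per-round bit budget, since a mere domination $N\le f(K)$ would not control the variance.
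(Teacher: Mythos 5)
Your proposal is essentially the paper's own proof: exactness via anchoring with distinct symbols plus VT-detection of the translocation, the forward/feedback costs written as affine functions of the round count ($N_{T\to R}=(2M+2)\log n$, $N_{R\to T}=3M$ with your $K=M-1$), and the round count stochastically dominated by a geometric variable of parameter $\tfrac12$ because the central anchor falls inside the (conditionally uniform) translocation span with probability at least $\tfrac12$ at every level, giving $\mathbb{E}[M]\le 2$ and the stated means. The only real divergence is the variance step: the paper, noting explicitly that dominance does not give $var[M]\le var[G]$, splits $var[M]$ at $\mathbb{E}[M]$ and bounds the lower piece by $\mathbb{Q}_M(1)=\tfrac12+O(1/n)$ and the upper piece by the geometric's upper-tail second moment, whereas you bound $\mathbb{E}[K^2]\le 3$ by dominance and then need the complementary estimate $\mathbb{E}[K]\ge 1-O(\log n/n)$ (a lower bound dominance does not supply, though it follows from the per-round continuation probability $\ge \tfrac12-O(2^i/n)$); this works but yields a marginally weaker lower-order error term than the stated $O(\log^2 n/n)$, which is immaterial to the leading $8\log^2 n$.
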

\begin{remark}
%From \ref{thmtranslocation}, we know that the feedback communication load, in average, is bounded above by 4. In addition,
Due to the symmetry of a translocation, the limited feedback protocol can be easily adapted for a forward link limited model by exchanging the roles of the transmitter and the receiver. %In other words, this protocol allow designer to push almost all communication load on one parity of the synchronization system.
\end{remark}
\begin{proof}
Let $M$ be the random variable counting the transmission rounds needed for Protocol 4 to terminate. Denote the distribution of $M$ by $\mathbb{Q}_{M}$.
% (we do not explicitly write the subscript $M$ if it is clear from the context).

If Protocol 4 terminates at round $M=m$, by that point, the transmitter has sent $m$ anchor symbols, $m-1$ VT-syndromes for detecting the translocation within the first $m-1$ rounds, and $2\log n$ bits and $\log n$ bits for synchronizing first from the insertion and then deletion error, respectively. Hence, the total number of bits sent by the transmitter equals $(2m+2)\log n$, and $\mathbb{E}[N_{T\to R}]$ and $var[N_{T\to R}]$ may be written as
\begin{align}
\mathbb{E}[N_{T\to R}]&=2\log n \,\mathbb{E}[M]+2\log n,\\
var[N_{T\to R}]&=(4\log^2n) \, var[M].
\end{align}
On the feedback link, the receiver sends out at each round the encoding of one of the five messages: (1) ``send $VT_l(S_{\sigma^{X}})$''; (2) ``parse $S_{\sigma^{X}}\gets \sigma^{X}_l$, $S_{\sigma^{Y}}\gets \sigma^{Y}_l$''; (3) ``parse $S_{\sigma^{X}}\gets \sigma^{X}_r$, $S_{\sigma^{Y}}\gets \sigma^{Y}_r$''; (4) ``send $CS_r(S_{\sigma^{X}})$ and $VT_l(S_{\sigma^{X}})$''; (5) ``send $CS_l(S_{\sigma^{X}})$ and $VT_r(S_{\sigma^{X}})$''. For the encoding, only three bits are needed.
Thus, we have
%$\mathbb{E}[N_{R\to T}]$ and $var[N_{R\to T}]$ can be expressed as
\begin{align}
\mathbb{E}[N_{R\to T}]&= 3 \, \mathbb{E}[M],\\
var[N_{R\to T}]&=9 \, var[M].
\end{align}
Next, we bound the moments $\mathbb{E}[M]$ and $var[M]$.

Protocol \ref{translocation} terminates at round $m$ if and only if the anchor symbol sent at round $m$ was shifted in $\sigma^{Y}$. Denote the probability of the event ``the $k^{th}$ entry in $\sigma^{X}$ was shifted in $\sigma^{Y}$" by $\mathbb{P}_k$. If the $k^{th}$ entry in $\sigma^{X}$ was not shifted in $\sigma^{Y}$, then either the translocation error was contained within the first $k-1$ positions or contained within the last $n-k$ positions. For permutation strings of length $k-2$ and $n-k$, one can perform $(k-2)^2$ and $(n-k-1)^2$ different translocations, respectively. Thus we have,
\begin{align}
\mathbb{P}_k=1-\frac{(k-2)^2+(n-k-1)^2}{(n-1)^2},
\end{align}
which is maximized at $k^{*}=\lceil\frac{n}{2}\rceil$. Since in the first round the center symbol $\sigma^{X}_{\lceil\frac{n}{2}\rceil}$ is checked, the probability that the protocol terminates at round one is
\begin{align}
\mathbb{Q}_{M}(M=1)=\mathbb{P}_{k^{*}}=
\begin{cases}
\frac{1}{2}+\frac{2}{n-1}-\frac{2}{(n-1)^2}& \text{if}\,\, n \,\text{\,is \,odd};\\
\frac{1}{2}+\frac{2}{n-1}-\frac{5}{2(n-1)^2}& \text{otherwise}.
\end{cases}
\end{align}
If the received symbol is accurately anchored at $\lceil \frac{n}{2}\rceil$, the protocol uses the VT-syndrome to determine which half of $\sigma^{X}$ contains the translocation. The process is repeated for the substring that contains the translocation error.
%Note that $\frac{n+1}{2}\in \mathbb{N}$ if and only if $n$ is odd. Under the restriction $k\in \mathbb{N}$, $\mathbb{P}_k$ is maximized at $k=\lceil\frac{n}{2}\rceil$. For ease of analysis, henceforth we assume $\frac{n+1}{2}$ is an integer for all $n$.
The length of the substring of interest at each round is characterized in Lemma \ref{lemmatranslocation}

\begin{lemma}
Let $\{a_k\}_{k=1}^{\infty}$ be a sequence such that $a_i$ denotes the length of the substring of $\sigma^{X}$ (or, equivalently, $\sigma^{Y}$) at round $k$. Then
\begin{align}
a_k=
\begin{cases}
\frac{n+1-2^{k-1}}{2^{k-1}} &\forall\, k\le \log (n+1)-1\\
0& \text{otherwise}.
\end{cases}
\end{align}
\label{lemmatranslocation}
\begin{proof}
We prove this claim by induction.

Base Case: When $k=1$, $\frac{n+1-2^{1-1}}{2^{1-1}}=n=a_1$.

Induction Hypothesis: Suppose that $a_k=\frac{n+1-2^{k-1}}{2^{k-1}}$ for all $k\le\log (n+1)-2$.

It is straightforward to see that
\begin{align}
a_{k+1}&=\frac{\frac{n+1-2^{k-1}}{2^{k-1}}+1}{2}-1.\\
&=\frac{n+1-2^k}{2^k}.
\end{align}
The algorithm performs splitting until the substrings reach a threshold length which cannot be smaller than three. Hence
\begin{align}
&a_k=\frac{n+1-2^{k-1}}{2^{k-1}}\ge 3\\
&\Rightarrow k\le \log(n+1)-1.
\end{align}
Since at most $\log(n+1)-1$ rounds are needed, $a_k=0$ for all $k\ge \log(n+1)$.
\end{proof}
\end{lemma}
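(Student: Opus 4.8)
The plan is to prove the claim by induction on the round index $k$, following the substring-length recursion induced by Protocol~\ref{translocation}. For the base case $k=1$, the substring of interest is all of $\sigma^{X}$ (equivalently $\sigma^{Y}$), so $a_1=n$, and the closed form evaluates to $\frac{n+1-2^{0}}{2^{0}}=n$, as needed.

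For the inductive step, the key point is to establish the one-step recursion $a_{k+1}=\frac{a_k+1}{2}-1$. This follows from the way the protocol parses: at round $k$ the transmitter sends the central symbol of the current substring of length $a_k$, sitting at position $\lceil a_k/2\rceil$; removing this anchor splits the substring into a left part and a right part, each of length $\frac{a_k-1}{2}$ (up to the rounding coming from the ceiling, which I would carry along as a floor and suppress in the displayed formula, consistent with the convention used in the surrounding analysis). Whether step~4 of the protocol routes the recursion to the left half or to the right half, the new substring has this length, so $a_{k+1}=\frac{a_k-1}{2}$. Substituting the induction hypothesis $a_k=\frac{n+1-2^{k-1}}{2^{k-1}}$ gives
\[
a_{k+1}=\frac{1}{2}\!\left(\frac{n+1-2^{k-1}}{2^{k-1}}-1\right)=\frac{n+1-2^{k-1}-2^{k-1}}{2^{k}}=\frac{n+1-2^{k}}{2^{k}},
\]
which is the asserted expression with $k$ replaced by $k+1$. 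To pin down when the recursion stops, note that splitting can only continue while the current substring is long enough to hold an anchor symbol together with a nonempty portion on each side in which, respectively, the deletion and the insertion may reside; this forces $a_k\ge 3$. Solving $\frac{n+1-2^{k-1}}{2^{k-1}}\ge 3$ yields $n+1\ge 2^{k+1}$, i.e.\ $k\le \log(n+1)-1$, so the first branch of the formula is valid exactly for these $k$, the total number of rounds is at most $\log(n+1)-1$, and we set $a_k=0$ for $k\ge\log(n+1)$, matching the second branch.

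The main obstacle here is bookkeeping rather than anything conceptual: one has to be careful about the ceiling/floor in the definition of the central position so that the exact recursion $a_{k+1}=\lfloor (a_k-1)/2\rfloor$ is precisely what the protocol produces, and one has to check that the threshold-length argument genuinely terminates the parsing even in the worst case, e.g.\ when the deleted and inserted symbols end up adjacent. Since the paper adopts the idealized divisibility convention, I would state the rounded recursion once, observe that it collapses to the displayed closed form when $n+1$ is a power of two, and absorb the $O(1)$ rounding discrepancies into the lower-order terms, as is done elsewhere in this section.
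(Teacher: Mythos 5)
Your proof is correct and follows essentially the same route as the paper: induction on the round index with the one-step recursion $a_{k+1}=\frac{a_k+1}{2}-1=\frac{a_k-1}{2}$ arising from anchoring the central symbol, and the same threshold argument $a_k\ge 3 \Rightarrow k\le \log(n+1)-1$ to terminate the splitting. Your explicit handling of the ceiling/floor rounding is a minor refinement of bookkeeping the paper leaves implicit, not a different argument.
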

As a result, the distribution of $M$ has the following closed form
\begin{align}
\mathbb{Q}_{M}(m)&=\left(\frac{1}{2}+\frac{2}{\frac{n+1-2^{m-1}}{2^{m-1}}-1}-\frac{2}{(\frac{n+1-2^{m-1}}{2^{m-1}}-1)^2}\right)\times \nonumber\\ &\prod_{i=1}^{m-1}\left(\frac{1}{2}-\frac{2}{\frac{n+1-2^{i-1}}{2^{i-1}}-1}+\frac{2}{(\frac{n+1-2^{i-1}}{2^{i-1}}-1)^2}\right),\nonumber
\end{align}
for $m\le \log(n+1)-1$; and $\mathbb{Q}_{M}(m)=0$ otherwise.% (a detailed derivation is provided in the full version of the paper~\cite{sumilenkovic}).%Thus
%\begin{align}
%\mathbb{E}[N_{T\to R}]&=\mathbb{Q}_{M}(1)(\log n+3\log n)\nonumber\\%+ \mathbb{Q}_{M}(2)(2\log n+4\log n)+\cdots\\
%&\quad +\mathbb{Q}_{M}(k)(2(k-1)\log n+4\log n)+\cdots\nonumber\\
%%&=4\log n\sum_{k=1}^{\infty}\mathbb{Q}_{M}(k)+2\log n\sum_{k=1}^{\infty}(k-1)\mathbb{Q}_{M}(k)\nonumber\\
%&=2\log n+2\log n\mathbb{E}[M]\nonumber.
%\end{align}
%
%\begin{align}
%var[N_{T\to R}]&=\mathbb{Q}_{M}(1)[4\log n-(2\log n+2\log n\mathbb{E}[R])]^2+\nonumber\\
%&\quad\cdots+\mathbb{Q}_{M}(k)[2(k-1)\log n+4\log n\nonumber\\
%&\quad-(2\log n+2\log n\mathbb{E}[M])]^2+\cdots\nonumber\\
%%&=4\log^2n\mathbb{Q}_{M}(1)[1-\mathbb{E}[R]]^2+\cdots+4\log^2n\mathbb{Q}(k)[k-\mathbb{E}[R]]^2+\cdots\nonumber\\
%&=(4\log^2n) var[M].
%\end{align}
%For the feedback link, at each round the receiver sends out the encoding of one of the five messages: (1) sending $VT_l(S_{\sigma^{X}})$; (2)$S_{\sigma^{X}}\gets \sigma^{X}_l$, $S_{\sigma^{Y}}\gets \sigma^{Y}_l$; (3)$S_{\sigma^{X}}\gets \sigma^{X}_r$, $S_{\sigma^{Y}}\gets \sigma^{Y}_r$;(4) sending $CS_r(S_{\sigma^{X}})$ and $VT_l(S_{\sigma^{X}})$;(5) sending $CS_l(S_{\sigma^{X}})$ and $VT_r(S_{\sigma^{X}})$. At most two bits are sent out each round. Therefore,
%\begin{align}
%\mathbb{E}[N_{R\to T}]&\le 2\mathbb{E}[M]\\
%\text{and}\quad
%var[N_{R\to T}]&=2var[M]
%\end{align}
%The $var[N_{T\to R}]$ and $var[N_{R\to T}]$ are used to gauge the stableness of our protocol.
%Similarly, we have

Suppose next that $G$ is a geometric random variable with parameter $\frac{1}{2}$. It can be shown by induction that the random variable $M$ is first-order stochastically dominated by $G$, i.e., for all $m$,
\begin{align}
\mathbb{Q}_{M}(M\le m)>\mathbb{P}(G\le m),
\end{align}
%
%\begin{align}
%\mathbb{Q}(1)&=\frac{1}{2}+\frac{2}{n-1}-\frac{2}{(n-1)^2}> \frac{1}{2}=\mathbb{P}(G=1).
%\end{align}
%Suppose $\mathbb{Q}(\textbf{R}\le k)>\sum_{i=1}^k\frac{1}{2^i}=\mathbb{P}(G\le k)$.
%\begin{align}
%\mathbb{Q}(\textbf{R}\le k+1)&=\mathbb{Q}(\textbf{R}\le k)+\mathbb{Q}(\textbf{R}=k+1)\\
%&=\mathbb{Q}(\textbf{R}\le k)+(1-\mathbb{Q}(\textbf{R}\le k))\times\nonumber\\
%&\quad(\frac{1}{2}+\frac{2}{\frac{n+1-2^{k}}{2^{k}}-1}-\frac{2}{(\frac{n+1-2^{k}}{2^{k}}-1)^2})\\
%&>\mathbb{Q}(\textbf{R}\le k)+\frac{1}{2}(1-\mathbb{Q}(\textbf{R}\le k))\\
%%&=\frac{1}{2}+\frac{1}{2}\mathbb{Q}(\textbf{R}\le k)\\
%&>\frac{1}{2}+\frac{1}{2}\sum_{i=1}^k\frac{1}{2^i}\\
%%&=\sum_{i=1}^{k+1}\frac{1}{2^i}\\
%&=\mathbb{P}(G\le k+1).
%\end{align}
which immediately implies
\[
\mathbb{E}[M]\le\mathbb{E}[G]=2.
\]
Nevertheless, the claim that $var[M]\le var[G]$ may not hold in general. %since the variance is not concave in the distribution measure $\mathbb{P}$.
Still, we may write
\begin{align}
var[M]&=\mathbb{E}[M-\mathbb{E}[M]]^2\nonumber\\
&=\mathbb{E}_{M\le \mathbb{E}[M]}[M-\mathbb{E}[M]]^2+\mathbb{E}_{M\ge \mathbb{E}[M]}[M-\mathbb{E}[M]]^2.
\label{stochasticdominance}
\end{align}
By observing that $1<\mathbb{E}[M]<2$, the first term on the right hand side of (\ref{stochasticdominance}) can be bounded as
\begin{align}
\mathbb{E}_{M\le\mathbb{E}[M]}[M-\mathbb{E}[M]]^2\le \mathbb{Q}(1)=\frac{1}{2}+O(\frac{1}{n})\nonumber.
\end{align}
Similarly, it can be shown that the second term on the right hand side of (\ref{stochasticdominance}) satisfies
\begin{align}
%\mathbb{Q}[\textbf{R}\ge k+\mathbb{E}[\textbf{R}]]&\le\mathbb{P}_{G}[G\ge k+\mathbb{E}[\textbf{R}]]\nonumber\\
%&=\mathbb{P}_{G}[G\ge k+\mathbb{E}[G]]\nonumber
%%&\Rightarrow \quad \mathbb{Q}[(\textbf{R}-\mathbb{E}[\textbf{R}])^2\ge k^2]\le \mathbb{P}_{G}[(G-\mathbb{E}[G])^2\ge k^2]\\
%%&\Rightarrow \quad
\mathbb{E}_{M\ge \mathbb{E}[M]}[M-\mathbb{E}[M]]^2\le \mathbb{E}_{G\ge \mathbb{E}[G]}[G-\mathbb{E}[G]]^2\nonumber,
\end{align}
which completes the proof.
\end{proof}

\section{Synchronization from a Single Transposition Error}\label{sec: transposition}
%\vspace{-0.07in}
Suppose that $\sigma^{Y}=\sigma^{X}\tau$, where $\tau$ is a transposition. Let $\tau=(a\,b)$, where $a,b\in [n]$ and $a<b$, implying that the elements $\sigma^{X}_a$ and $\sigma^{X}_b$ were swapped. In this scenario, the genie-aided lower bound equals $\log \binom{n}{2}=2\log n+O(1)$.

We first show that anchoring strategies cannot lead to order optimal protocols. Since a transposition is equivalent to two substitution errors, an anchoring strategy reduces to a trivial ``send and check" interaction, i.e., the transmitter keeps sending different symbols until one of the swapped symbols is identified. Denote the number of rounds before the protocol terminates by $M_{\tau}$. Since
\begin{align}
\mathbb{E}[M_{\tau}]&=\sum_{k=1}^{n}\mathbb{P}[M_{\tau}\ge k]=\frac{n+1}{3},\nonumber\\
\text{where}\,\,~
\mathbb{P}[M_{\tau}\ge k]&=\frac{\binom{n-2}{k-1}}{\binom{n}{k-1}}=\frac{(n-k)(n-k+1)}{n(n-1)},
\end{align}
the average number of transmitted bits equals
\[
\mathbb{E}[N_{T\to R}]=\mathbb{E}[M_{\tau}]\log n=\frac{n+1}{3}\log n.
\]
We show next that a single transposition can be synchronized using an one-way protocol in which the transmitter sends the encoding of three quantities: $\delta^{X}_1=\Sigma_{i=1}^{n}i\,\sigma^{X}_i$, $\delta^{X}_2=\Sigma_{i=1}^{n}i^2 \, \sigma^{X}_i$ and $\delta^{X}_3=\Sigma_{i=1}^{n}i^3 \, \sigma^{Y}_i$. Similarly, let $\delta^{Y}_1=\Sigma_{i=1}^{n} i \, \sigma^{Y}_i$, $\delta^{Y}_2=\Sigma_{i=1}^{n}i^2 \, \sigma^{Y}_i$ and $\delta^{Y}_3=\Sigma_{i=1}^{n}i^3 \,\sigma^{Y}_i$. The receiver computes $a$ and $b$ from
\begin{align}
\begin{cases}
\delta^{Y}_1-\delta^{X}_1&=(\sigma^{X}_{b}-\sigma^{X}_{a})(a-b);\\
\delta^{Y}_2-\delta^{X}_2&=(\sigma^{X}_{b}-\sigma^{X}_{a})(a-b)(a+b);\\
\delta^{Y}_3-\delta^{X}_3&=(\sigma^{X}_{b}-\sigma^{X}_{a})(a-b)(a^2+b^2+a\,b).\nonumber
\end{cases}
\end{align}
and then solves the system of equations
\begin{align}
a+b=\frac{\delta^{Y}_2-\delta^{X}_2}{\delta^{Y}_1-\delta^{X}_1}; \;\; \;
a^2+b^2+a \, b=\frac{\delta^{Y}_3-\delta^{X}_3}{\delta^{Y}_1-\delta^{X}_1}.
\end{align}
The average number of transmitted bits equals $12\log n$.

Note that the moment sums $\delta^{X}_i$, $i=1,2,3$, may be seen as generalized VT-syndromes as well as ordinal Reed-Solomon type parity-checks.

%\begin{algorithm}
%\caption{Protocol for Single Transposition}
%\BlankLine
%The transmitter sends the compute $\delta_1^{X}=\Sigma_{i=1}^{n}(i\cdot\sigma^{X}_i)$, \, $\delta_2^{X}=\Sigma_{i=1}^{n}(i^2\cdot\sigma^{X}_i)$ as well as $\delta_3^{X}=\Sigma_{i=1}^{n}(i^3\cdot\sigma^{X}_i)$ and sends them to the receiver\;
%The receiver compute $\delta_1^{Y}=\Sigma_{i=1}^{n}(i\cdot\sigma^{Y}_i)$, \, $\delta_2^{Y}=\Sigma_{i=1}^{n}(i^2\cdot\sigma^{Y}_i)$ as well as $\delta_3^{Y}=\Sigma_{i=1}^{n}(i^3\cdot\sigma^{Y}_i)$ and compute the differences $\Delta_1=\delta_1^{Y}=\Sigma_{i=1}^{n}(i\cdot\sigma^{Y}_i)-\Sigma_{i=1}^{n}(i\cdot\sigma^{X}_i)$, $\Delta_2=\delta_1^{Y}=\Sigma_{i=1}^{n}(i^2\cdot\sigma^{Y}_i)-\Sigma_{i=1}^{n}(i^2\cdot\sigma^{Y}_i)$ as well as $\Delta_3=\delta_1^{Y}=\Sigma_{i=1}^{n}(i^3\cdot\sigma^{Y}_i)-\Sigma_{i=1}^{n}(i^3\cdot\sigma^{Y}_i)$\;
%The receiver can recover $p_1$ and $p_2$ by the solving the systems:$p_1+p_2=\frac{\Delta_2}{\Delta_1}$ and $p_1^2+p_2^2+p_1p_2=\frac{\Delta_3}{\Delta_1}$.
%\end{algorithm}

\section{Conclusion}

In this work, we have explored the problem of synchronizing ordinal data with special attention to the scenario when there is stringent constraint on the feedback link throughput per synchronization procedure. Four types of information edits--random deletions/insertions, block deletions/insertions, single transloations and single transpositions--have been analyzed individually.
For $\sigma^{Y}$ and $\sigma^{X}$ mis-synchronized by deletions, we exhibit protocols within a factor of two and a factor of five from the genie-aided limits for $c_{tr} \simeq c_{rt}$ and $c_{tr}\gg c_{rt}$, respectively. When the synchronization error is a single translocation, a protocol within a factor of three from the genie-aided limit is proposed. For single transposition errors, we describe a one-way protocol within a factor of six from the genie-aided limit. This protocol uses generalization of Varshamov-Tenengolz and Reed-Solomon codes for ordinal information.

\bibliographystyle{plain}
%\bibliography{synbib}
\bibliography{synbib}

% if have a single appendix:
%\appendix[Proof of the Zonklar Equations]
% or
%\appendix  % for no appendix heading
% do not use \section anymore after \appendix, only \section*
% is possibly needed

% use appendices with more than one appendix
% then use \section to start each appendix
% you must declare a \section before using any
% \subsection or using \label (\appendices by itself
% starts a section numbered zero.)
%

\appendices
%\section{Proof of the First Zonklar Equation}
%Appendix one text goes here.
%
%% you can choose not to have a title for an appendix
%% if you want by leaving the argument blank
%\section{}
%Appendix two text goes here.

% use section* for acknowledgement
\section*{Acknowledgment}

This work was supported in part by NSF grants CCF 0809895, CCF 1218764 and the Emerging Frontiers for Science of Information Center, CCF 0939370.

% Can use something like this to put references on a page
% by themselves when using endfloat and the captionsoff option.
\ifCLASSOPTIONcaptionsoff
  \newpage
\fi

\end{document}